\date{}
\newsavebox{\fmbox}
\newenvironment{fmpage}[1]
     {\medskip\begin{lrbox}{\fmbox}\begin{minipage}{#1}}
     {\end{minipage}\end{lrbox}\fbox{\usebox{\fmbox}}\medskip}
\newcommand{\algorithm}[1]{
\begin{center}
\begin{fmpage}{\textwidth}
#1
\end{fmpage}
\end{center}}
\newtheorem{lemma2}{Lemma}
\newcommand{\ket}[1]{|#1\rangle}
\newcommand{\sz}[1]{\left|#1\right|}
\newcommand{\abs}[1]{\left|#1\right|}
\newcommand{\wh}{\widehat}
\newcommand{\wt}{\widetilde}
\newcommand{\gv}{\gamma_{f,v}}
\newcommand{\gf}{\gamma_{f}}
\newcommand{\tf}{\wt{F_v}}
\newcommand{\DU}{D_f^U}
\newcommand{\qc}{t_{\mathrm{cla}}}
\newcommand{\Ac}{\mathcal{A}_{\mathrm{cla}}}
\newcommand{\BHSP}{BHSP}
\newcommand{\RR}{\mathbb{R}}
\newcommand{\ZZ}{\mathbb{Z}}
\newcommand{\Span}{\mathsf{Span}}
\newcommand{\deq}{\stackrel{\textrm{def}}{=}}
\newcommand{\ip}[2]{\langle #1,#2\rangle}
\newcommand{\PR}[2][]{\mathop{\mathbf{Pr}}_{#1}{\left[{#2}\right]}}
\newcommand{\E}[2][]{\mathop{\mathbf{E}}_{#1}{\left[{#2}\right]}}
\newcommand{\MyComment}[1]{\ClassWarning{My Macros}{#1}}
\newcommand{\mynb}[1]{\MyComment{A valuable comment may follow...}}
\title{Quantum algorithm for the Boolean\\ hidden shift problem}
\author{Dmitry Gavinsky\thanks{$\{$dmitry,mroetteler,jroland$\}$@nec-labs.com}
 \and Martin Roetteler$^\star$
 \and J{\'e}r{\'e}mie Roland$^\star$
}
\institute{NEC Laboratories America, Inc.}
\begin{document}

\maketitle

\thispagestyle{empty}

\begin{abstract}
The hidden shift problem is a natural place to look for new separations between classical and quantum models of computation. One advantage of this problem is its flexibility, since it can be defined for a whole range of functions and a whole range of underlying groups. In a way, this distinguishes it from the hidden subgroup problem where more stringent requirements about the existence of a periodic subgroup have to be made. And yet, the hidden shift problem proves to be rich enough to capture interesting features of problems of algebraic, geometric, and combinatorial flavor. We present a quantum algorithm to identify the hidden shift for any Boolean function. Using Fourier analysis for Boolean functions we relate the time and query complexity of the algorithm to an intrinsic property of the function, namely its minimum influence. We show that for randomly chosen functions the time complexity of the algorithm is polynomial. Based on this we show an average case exponential separation between classical and quantum time complexity. A perhaps interesting aspect of this work is that, while the extremal case of the Boolean hidden shift problem over so-called bent functions can be reduced to a hidden {\em subgroup} problem over an abelian group, the more general case studied here does not seem to allow such a reduction.
\end{abstract}

\setcounter{page}{1}

\section{Introduction}

Hidden shift problems have been studied in quantum computing as they
provide a framework that can give rise to new quantum algorithms. The
hidden shift problem was first introduced and studied in a paper by
van Dam, Hallgren and Ip \cite{vDHI:2003} and is defined as follows.
We are given two functions $f$, $g$ that map a finite group $G$ to
some set with the additional promise that there exists an element
$s\in G$, the so-called shift, such that for all $x$ it holds that
$g(x) = f(x+s)$. The task is to find $s$. Here the group $G$ is
additively denoted, but the problem can be defined for non-abelian
groups as well. The great flexibility in the definition allows to
capture interesting problems ranging from algebraic problems such as
the shifted Legendre symbol \cite{vDHI:2003}, over geometric problems
such as finding the center of shifted spheres \cite{CSV:2007,Liu:2009}
and shifted lattices \cite{Regev:2004}, to combinatorial problems such
as graph isomorphism \cite{CW:2007}.

Notable here is a well-known connection between the hidden subgroup
problem for the dihedral group, a notoriously difficult instance which
itself has connections to lattice problems and average case subset sum
\cite{Regev:2004} and a hidden shift problem over the cyclic group
$\ZZ_n$ where the functions $f$ and $g$ are injective
\cite{Kuperberg:2005,MRRS:2007,CvD:2007}. It is known \cite{FIMSS:2003,Kuperberg:2005} that the hidden
shift problem for {\em injective} functions $f,g: G \rightarrow S$
that map from an abelian $G$ to a set $S$ is equivalent to hidden subgroup
problem over the semi-direct product between $G$ and $\ZZ_2$, where the action of $\ZZ_2$ on $G$ is
given by the inverse. We would like to point out that the functions studied here are Boolean functions (i.e., $G=\ZZ_2^n$) and therefore
far from being injective. Even turning them into injective {\em quantum}
functions, as is possible for bent functions \cite{Roetteler:2010},
seems not to be obvious in this case. 
 Another recent example of a
non-abelian hidden shift problem arises in a reduction used to argue
that the McEliece cryptosystems withstands certain types of quantum
attacks \cite{DMS:2010}.

In this paper we confine ourselves to the abelian case and in
particular to the case where $G=\ZZ_2^n$ is the Boolean hypercube. The
resulting hidden shift problem for Boolean functions, i.e., functions
that take $n$ bits as inputs and output just $1$ bit, at first glance
looks rather innocent. However, to our knowledge, the Boolean case was
previously only addressed for two extreme cases: a) functions which
mark precisely one element and b) functions which are maximally apart
from any affine Boolean function (so-called bent functions). In case
a), the problem of finding the shift is the same as unstructured
search, so that the hidden shift can be found by Grover's algorithm
\cite{Grover:96} and the query complexity is known to be tight and is
given by $\Theta(\sqrt{2^n})$.

In case b) the hidden shift can be discovered in one query using an
algorithm that was found by one of the co-authors
\cite{Roetteler:2010}, provided that the {\em dual} of the function
can be computed efficiently, where the definition of the dual is via
the Fourier spectrum of the function which in this case can be shown
to be flat in absolute value.  If no efficient implementation of the
dual is known then still a quantum algorithm exists that can identify
the hidden shift in $O(n)$ queries. The present paper can be thought
of as a generalization of this latter algorithm to the case of Boolean
functions other than those having a flat spectrum. This is motivated
by the quite natural question of what happens when the extremal
conditions leading to the family of bent functions are relaxed. In this paper we
address the question of whether there is a broader class of functions for
which hidden shifts of a function can be identified.

The first obvious step in direction of a generalization is actually a
roadblock: Grover's search problem~\cite{Grover:96} can also be cast
as a hidden shift problem. In this case the corresponding class of
Boolean functions are the \emph{delta functions}, i.e., $f, g:
\{0,1\}^n \rightarrow \{0,1\}$, where $g(x)=f(x+s)$ and $f(x)$ is the
function that takes value $1$ on input $(0, \ldots, 0)$ and $0$
elsewhere and $g(x)$ is the function that takes the value $1$ on input
$s$ and $0$ elsewhere.  Grover's algorithm \cite{Grover:96} allows to
find $s$ in time $O(\sqrt{2^n})$ on a quantum computer (which is also
the fastest possible \cite{BV:97}).

Thus, the following situation emerges for the quantum and the classical query complexities of these two extremal cases: for bent functions the classical query complexity\footnote{Note that the query complexity depends crucially on how the functions $f$ and $g$ can be accessed: the stated bounds hold for the case where $f$ and $g$ are given as black-boxes. If $f$ is a {\em known} bent function, then it is easy to see that the classical query complexity becomes $O(n)$.} is $\Omega(\sqrt{2^n})$ and the quantum query complexity\footnote{A further improvement is possible in case the so-called {\em dual} bent function
  $\widetilde{f}$ is accessible via another black-box: in this case
  the quantum query complexity becomes constant
  \cite{Roetteler:2010}.} is $O(n)$.  For delta functions the classical query
complexity is $\Theta(2^n)$ and the quantum query complexity is
$\Theta(\sqrt{2^n})$.

For a general Boolean function the hidden shift problem can be seen as
lying somewhere between these two extreme cases. This is somewhat
similar to how the so-called weighing matrix
problem~\cite{WeighingMatrices} interpolates between the
Bernstein-Vazirani problem~\cite{BV:97} and Grover search, and how the
generalized hidden shift problem \cite{CvD:2007} interpolates between
the abelian and dihedral hidden subgroup problems. However, apart from
these two extremes, not much is known about the query complexity of
the hidden shift problem for general Boolean functions.

The main goal of this work was to understand the space between these
two extremes. We show that there is a natural way to ``interpolate''
between them and to give an algorithm for each Boolean function whose
query complexity depends only on properties of the Fourier spectrum of
that function.

\bigskip

\paragraph{\bf Prior work.} As far as hidden shifts of Boolean
functions are concerned, besides the mentioned papers about the bent
case and the case of search, very little was known. The main technique
previously used to tackle hidden shift problem was by computing a
suitable convolution. However, in order to maintain unitarity, much of
target function's features that we want to compute the convolution
with had to be ``sacrificed'' by requiring the function to become
diagonal unitary, leading to a renormalization of the diagonal
elements, an issue perhaps first pointed out by \cite{CurtisMeyer}.
No such renormalization is necessary if the spectrum is already flat
which corresponds to the case of the Legendre symbol \cite{vDHI:2003}
(with the exception of one special value at 0) and the case of
bent functions which was considered in \cite{Roetteler:2010}.

\paragraph{\bf Our results.} We introduce a quantum algorithm that allows
us to sample from vectors that are perpendicular to the hidden shift
$v$ according to a distribution that is related to the Fourier
spectrum of the given Boolean function $f$. If $f$ is bent, then this
distribution is uniform which in turn leads to a unique
characterization of $v$ from $O(n)$ queries via a system of linear
equations. For general $f$ more queries might be necessary and
intuitively the more concentrated the Fourier spectrum of $f$ is, the
more queries have to be made: in the extreme case of a ($\pm 1$
valued) delta function $f$ the spectrum is extremely imbalanced and
concentrated almost entirely on the zero Fourier coefficient which
corresponds to the case of unstructured search for which our algorithm
offers no advantage over Grover's algorithm. For general $f$ we give
an upper bound on the number of queries in terms of the {\em
  influence} $\gf$ of the function $f$, where the influence is defined
as $\gf = \min_{v} (\PR[x]{f(x)\neq f(x+v)})$.

From a simple application of the Chernoff bound it follows that it is
extremely unlikely that a randomly chosen Boolean function will give
rise to a hard instance for our quantum algorithm. This in turn gives 
rise to our main result of the paper: 

\bigskip
\noindent
{\bf Theorem 2 (Average case exponential separation).} {\em Let $({\cal O}_f,
  {\cal O}_g)$ be an instance of a Boolean hidden shift problem (BHSP)
  where $g(x)=f(x+v)$ and $f$ and $v$ are chosen uniformly at random.
  Then there exists a quantum algorithm which finds $v$ with bounded error using $O(n)$
  queries and in $O(\mathrm{poly}(n))$ time whereas any classical algorithm needs
  $\Omega(2^{n/2})$ queries to achieve the same task.}
 \smallskip

This result can be interpreted as an exponential quantum-classical
separation for the time and query complexity of an average case problem. Finally,
we would like to comment on the relationship between the problem
considered in this paper and the abelian hidden subgroup problem.  It
is interesting to note, yet not particularly difficult to see, that the
case of a hidden shift problem for bent functions can be reduced to
that of an abelian hidden subgroup problem. The hiding function in 
this case is a quantum function, i.\,e., it takes values in the set 
of quantum sets rather than just basis states. For the case of a 
non-bent function, including the cases of random functions considered 
here, the same direct correspondence to the hidden subgroup problem over 
an abelian group no longer exists, i.\,e., even though there is no obvious 
group/subgroup structure present in the function $f$, the algorithm 
can still identify the hidden shift $v$. 

\section{Preliminaries}

\begin{definition}[Boolean Hidden Shift Problem]\label{d_BHSP}
 Let $n \geq 1$ and let $f,g : \ZZ_2^n\to\ZZ_2$ be two Boolean functions such that the following conditions hold:
\begin{itemize}
 \item if for some $t\in \ZZ_2^n$ it holds that $f(x)\equiv f(x+t)$ then $t=0$;
 \item for some $s\in \ZZ_2^n$ it holds that $g(x)\equiv f(x+s)$.
\end{itemize}
If $f$ and $g$ are given by two oracles $O_f$ and $O_g$, we say that
the pair $(O_f,O_g)$ defines an instance of a hidden shift problem
(\BHSP) for the function $f$.  The value $s\in\ZZ_2^n$ that satisfies
$g(x)\equiv f(x+s)$ is the solution of the given instance of the
\BHSP.
\end{definition}

We also consider the $\{+1,-1\}$-valued function $F$ corresponding to
the function $f$ and view it as a function over $\RR$, that is,
\begin{align}
 F:\ZZ_2^n\to\RR:x\mapsto (-1)^{f(x)}.
\end{align}
The arguments of these functions are assumed to belong to $\ZZ_2^n$,
and their \emph{inner product} is defined accordingly, i.e., $
\ip{u}{v}=\bigoplus_{i=1}^n u_i\cdot v_i.  $ We also denote by
$\chi_u(\dots)$ the elements of the standard Fourier basis
corresponding to $\ZZ_2^n$, that is, $\chi_u(v)=(-1)^{\ip uv}$ for
every $u,v\in\ZZ_2^n$.


We will see that the complexity of the \BHSP\ depends on the notion of
influence.
\begin{definition}[Influence]
  For any Boolean function $f$ over $\ZZ_2^n$ and $n$-bit string $v$,
  we call $\gv=\PR[x]{f(x)\neq f(x+v)}$ the influence of $v$ over $f$,
  and $\gf=\min_{v}\gv$ the minimum influence over $f$.
\end{definition}

The following lemma relates the influence over a Boolean function $f$
to the Fourier spectrum of its $\{+1,-1\}$-valued analog $F$, see 
also \cite[Fact 11, p. 14]{Gopalan2009}. 

\begin{lemma}\label{lem:influence}
 $\gv=\sum_{u:\ip vu=1} \abs{\wh F(u)}^2.$
\end{lemma}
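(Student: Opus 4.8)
The plan is to compute $\gv$ directly from the definition using the $\{+1,-1\}$-valued analog $F$ and Parseval's identity. First I would observe that the event $f(x)\neq f(x+v)$ is equivalent to $F(x)F(x+v)=-1$, so that the indicator of this event equals $\tfrac12\bigl(1-F(x)F(x+v)\bigr)$. Averaging over a uniformly random $x\in\ZZ_2^n$ gives
\begin{align}
 \gv=\E[x]{\tfrac12\bigl(1-F(x)F(x+v)\bigr)}=\tfrac12-\tfrac12\,\E[x]{F(x)F(x+v)}.
\end{align}
The quantity $\E[x]{F(x)F(x+v)}$ is the autocorrelation of $F$ at shift $v$, and the second step is to evaluate it via the Fourier expansion $F(x)=\sum_u \wh F(u)\chi_u(x)$. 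Substituting and using $\chi_u(x)\chi_w(x+v)=\chi_{u\oplus w}(x)\chi_w(v)$ together with the orthogonality relation $\E[x]{\chi_{u\oplus w}(x)}=[u=w]$ collapses the double sum to $\sum_u \wh F(u)^2\chi_u(v)=\sum_u \abs{\wh F(u)}^2(-1)^{\ip uv}$.

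Putting these together yields $\gv=\tfrac12-\tfrac12\sum_u\abs{\wh F(u)}^2(-1)^{\ip uv}$. The final step is to split the sum according to the parity of $\ip uv$: writing $\sum_u\abs{\wh F(u)}^2=\sum_{u:\ip uv=0}\abs{\wh F(u)}^2+\sum_{u:\ip uv=1}\abs{\wh F(u)}^2$ and noting that by Parseval this total equals $\E[x]{F(x)^2}=1$, we get
\begin{align}
 \sum_u\abs{\wh F(u)}^2(-1)^{\ip uv}=\sum_{u:\ip uv=0}\abs{\wh F(u)}^2-\sum_{u:\ip uv=1}\abs{\wh F(u)}^2=1-2\sum_{u:\ip uv=1}\abs{\wh F(u)}^2.
\end{align}
Substituting back gives $\gv=\tfrac12-\tfrac12\bigl(1-2\sum_{u:\ip uv=1}\abs{\wh F(u)}^2\bigr)=\sum_{u:\ip uv=1}\abs{\wh F(u)}^2$, as claimed.

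This argument is entirely routine; there is no real obstacle, only bookkeeping. The one point that warrants a moment's care is the manipulation $\chi_u(x)\chi_w(x+v)=\chi_{u\oplus w}(x)\chi_w(v)$, which uses that $\chi_w$ is a character of $\ZZ_2^n$ (so $\chi_w(x+v)=\chi_w(x)\chi_w(v)$) and that $\chi_u\chi_w=\chi_{u\oplus w}$; once this is in hand, the orthogonality of characters and Parseval do the rest. It is worth recording that the identity also makes transparent the two extreme regimes discussed in the introduction: a flat spectrum $\abs{\wh F(u)}^2=2^{-n}$ (the bent case) gives $\gv=\tfrac12$ for every nonzero $v$, whereas spectrum concentrated near $u=0$ forces $\gv$ to be small for most $v$, matching the delta-function/search case.
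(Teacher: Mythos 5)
Your proof is correct. It takes a slightly different route from the paper's: you compute the autocorrelation $\E[x]{F(x)F(x+v)}$ by expanding both factors in the Fourier basis and using orthogonality of characters, obtaining $\sum_u \abs{\wh F(u)}^2\chi_u(v)$, and then normalize with Parseval applied to $F$ itself ($\sum_u\abs{\wh F(u)}^2=1$). The paper instead introduces the difference function $\tf(x)=F(x)-F(x+v)$, notes that $\wh\tf(u)=(1-\chi_u(v))\wh F(u)$ by linearity and the shift property, and applies Parseval to $\tf$: since $\abs{1-\chi_u(v)}^2$ is $4$ on $\{u:\ip vu=1\}$ and $0$ elsewhere, and $\frac14\abs{\tf(x)}^2$ is the indicator of $F(x)\neq F(x+v)$, the identity drops out in one line without any double sum. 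The two arguments are reorganizations of the same Fourier-analytic content (both hinge on $\wh{F(\cdot+v)}(u)=\chi_u(v)\wh F(u)$); the paper's version is marginally more compact, while yours makes the connection to the autocorrelation function explicit, which is what powers your nice closing observation about the bent and delta-function extremes. One cosmetic note: you silently pass between $\wh F(u)^2$ and $\abs{\wh F(u)}^2$, which is harmless here because $F$ is real-valued so its Fourier coefficients are real, but it is worth saying.
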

We give a proof of this lemma in Appendix \ref{app:influence} for completeness.

\section{Our algorithm}
\begin{theorem}\label{thm:complexity}
  There exists a quantum algorithm that solves an instance of \BHSP\
  defined over the function $f$ using expected $O(n/\sqrt{\gf})$
  oracle queries.  The algorithm takes expected time polynomial in the
  number of queries.
\end{theorem}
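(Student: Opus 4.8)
The plan is to isolate a short quantum sampling subroutine that, using only $O(1)$ oracle queries, outputs a single random linear constraint on the hidden shift $s$, and then to harvest $n$ linearly independent constraints, using amplitude amplification so that each new independent constraint costs only $O(1/\sqrt{\gf})$ queries. First I would construct the sampling state. Using one query each to $O_f$ and $O_g$ as phase oracles, prepare
\[ \tfrac{1}{\sqrt 2}\bigl(\ket 0\otimes 2^{-n/2}\textstyle\sum_x(-1)^{f(x)}\ket x+\ket 1\otimes 2^{-n/2}\sum_x(-1)^{g(x)}\ket x\bigr), \]
apply a Hadamard transform on the $x$-register, invoke $\wh G(u)=(-1)^{\ip us}\wh F(u)$ (which follows from $g(x)=f(x+s)$), and finish with a Hadamard on the control qubit to obtain
\[ \textstyle\sum_u \wh F(u)\,\ket{\ip us}\,\ket u. \]
Measuring both registers returns a pair $(u,\ip us)$ in which $u$ is distributed according to $\abs{\wh F(u)}^2$; this is a correct linear equation $\ip us=b$ on $s$, with the Fourier weight as the sampling distribution.

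Next I would introduce the progress measure. Maintain the subspace $W=\Span\{u_i\}$ spanned by the sampled vectors collected so far. Whenever $\dim W=k<n$, there is a nonzero $v$ orthogonal to $W$, so $W\subseteq\{u:\ip uv=0\}$, and therefore, by Lemma~\ref{lem:influence},
\[ \PR{u\notin W}\ \geq\ \PR{\ip uv=1}\ =\ \gv\ \geq\ \gf, \]
where the last inequality uses that $\gf$ is the minimum influence over nonzero directions (positive by the first non-degeneracy condition of Definition~\ref{d_BHSP}). Thus, at every stage of the algorithm, a fresh sample has probability at least $\gf$ of enlarging the span, which is exactly the quantity controlling the number of samples needed.

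The quadratic improvement, which is what separates the target bound from the naive $O(n/\gf)$ estimate, comes from amplitude amplification. I would treat the state preparation above as a unitary $A$ of constant query cost and take the ``good'' event to be the marked subspace $\{u\notin W\}$, whose membership is decidable by a $\mathrm{poly}(n)$ classical linear-algebra test that I compile into a query-free reflection. Running amplitude amplification with unknown success probability $a=\sum_{u\notin W}\abs{\wh F(u)}^2\geq\gf$, the standard exponential-guessing schedule locates a marked $u$ in expected $O(1/\sqrt a)=O(1/\sqrt{\gf})$ applications of $A$, and the companion register simultaneously delivers the matching bit $\ip us$. Repeating until $\dim W=n$ requires $n$ successful enlargements, for a total of $O(n/\sqrt{\gf})$ expected queries; solving the resulting full-rank system $\ip{u_i}{s}=b_i$ over $\ZZ_2$ then recovers $s$.

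The steps I expect to be most delicate are twofold. First, verifying that amplitude amplification applies cleanly here: that the reflection about the marked subspace is well-defined and genuinely query-free, and that the unknown-amplitude variant still yields the stated expected $O(1/\sqrt a)$ cost per round even though $a$ changes as $W$ grows (so the bound must hold uniformly in $a\geq\gf$). Second, confirming the time claim, namely that the circuit implementing $A$ together with all classical bookkeeping (membership tests and Gaussian elimination) is $\mathrm{poly}(n)$, so that the overall expected running time is polynomial in the $O(n/\sqrt{\gf})$ queries.
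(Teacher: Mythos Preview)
Your proposal is correct and follows essentially the same approach as the paper: a two-query sampling subroutine producing the state $\sum_u \wh F(u)\,\ket{\ip us}\ket u$, the span-growth analysis via Lemma~\ref{lem:influence} giving probability at least $\gf$ of a new independent constraint, and amplitude amplification (with unknown success probability, \`a la Brassard--H{\o}yer--Mosca--Tapp) for the quadratic speedup. The only cosmetic difference is the circuit you use to build the sampling state---you use a control qubit selecting between the two phase oracles, whereas the paper applies $O_f$, then $Z$ on an ancilla, then $O_g$ in sequence---but both constructions yield exactly the same pre-measurement state and the rest of the argument is identical.
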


\begin{figure}
\centerline{
\unitlength0.75pt%
\begin{picture}(10,50)(0,0)
\put(-25,0){\makebox(0,0)[l]{$\ket{0}^{\otimes n}$}}
\put(-25,40){\makebox(0,0)[l]{$\ket{0}$}}
\end{picture}%
\begin{picture}(16,50)(0,0)
\put(0,0){\line(1,0){16}}
\put(6,-7){\line(1,3){5}}
\put(0,40){\line(1,0){16}}
\end{picture}%
\begin{picture}(30,50)(0,0)
\put(0,-15){\framebox(30,30){$H^{\otimes n}$}}
\put(0,40){\line(1,0){30}}
\end{picture}%
\begin{picture}(10,50)(0,0)
\multiput(0,0)(0,40){2}{\line(1,0){10}}
\end{picture}%
\begin{picture}(30,50)(0,0)
\put(0,-15){\framebox(30,70){$O_f$}}
\end{picture}%
\begin{picture}(10,50)(0,0)
\put(0,0){\line(1,0){10}}
\put(0,40){\line(1,0){10}}
\end{picture}%
\begin{picture}(30,50)(0,0)
\put(0,25){\framebox(30,30){$Z$}}
\put(0,0){\line(1,0){30}}
\end{picture}%
\begin{picture}(10,50)(0,0)
\put(0,0){\line(1,0){10}}
\put(0,40){\line(1,0){10}}
\end{picture}%
\begin{picture}(30,50)(0,0)
\put(0,-15){\framebox(30,70){$O_g$}}
\end{picture}%
\begin{picture}(10,50)(0,0)
\multiput(0,0)(0,40){2}{\line(1,0){10}}
\end{picture}%
\begin{picture}(30,50)(0,0)
\put(0,-15){\framebox(30,30){$H^{\otimes n}$}}
\put(0,40){\line(1,0){30}}
\end{picture}%
\begin{picture}(16,50)(0,0)
\put(0,0){\line(1,0){16}}
\put(6,-7){\line(1,3){5}}
\put(0,40){\line(1,0){16}}
\end{picture}%
\begin{picture}(20,50)(0,0)
\put(5,20){\makebox(0,0)[l]{$\left. \rule{0mm}{1.0cm} \right\}$}}
\put(13,20){\makebox(0,0)[l]{{ measure}}}
\end{picture}%
}
\vspace{2mm}
\caption{Quantum circuit for the {\bf Sampling Subroutine}.\label{fig:sampling}}
\end{figure}
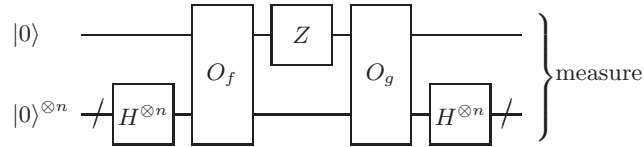

\begin{proof}
  The algorithm relies on the {\bf Sampling Subroutine} described in
  Fig.~\ref{fig:sampling}, where $H$ denotes the standard Hadamard
  gate, $Z$ is a phase gate acting on one qubit as
  $Z:\ket{b}\mapsto (-1)^b\ket{b}$, and $O_f$ is the oracle for
  $f$ acting on $n+1$ qubits as $O_f:\ket{b}\ket{x}\mapsto\ket{b\oplus
    f(x)}\ket{x}$ (similarly for $O_g$). The algorithm works as
  follows:

\algorithm{ {\bf Quantum algorithm}
\begin{enumerate}
 \item Set $i=1$
 \item\label{step:sampling} Run the {\bf Sampling Subroutine}.  Denote
   by $(b_i,u_i)$ the output of the measurement.
\item If $\Span\{u_k|k\in{[i]}\}\neq\ZZ_2^n$, increment $i\rightarrow
  i{+}1$ and go back to step~\ref{step:sampling}. Otherwise set $t=i$
  and continue.
\item\label{step:system} Output ``$s$'', where $s$ is the unique
  solution of
\begin{align*}
 \begin{cases}
 \ip{u_1}s=b_1;\\
 \quad\dots\\
 \ip{u_t}s=b_t.
 \end{cases}
\end{align*}
\end{enumerate}
}

Obviously, this algorithm makes $O(t)$ quantum queries to the oracles
and its complexity is polynomial in $t+n$. The quantum state before the measurement is
\begin{align}
  \ket{0}\ket{0}^{\otimes n}
&\stackrel{H^{\otimes n}}{\longmapsto}\frac{1}{\sqrt{2^n}}\sum_x\ket{0}\ket{x}
\stackrel{O_f}{\longmapsto} \frac{1}{\sqrt{2^n}}\sum_x\ket{f(x)}\ket{x}
\stackrel{Z}{\longmapsto}\frac{1}{\sqrt{2^n}}\sum_x(-1)^{f(x)}\ket{f(x)}\ket{x}\nonumber\\
&\stackrel{O_g}{\longmapsto}\frac{1}{\sqrt{2^n}}\sum_x(-1)^{f(x)}\ket{f(x)\oplus g(x)}\ket{x}\nonumber\\
&\phantom{\longmapsto}=\frac{1}{\sqrt{2^n}}\ket{0}\sum_x\frac{F(x)+F(x+s)}{2}\ket{x}+\frac{1}{\sqrt{2^n}}\ket{1}\sum_x\frac{F(x)-F(x+s)}{2}\ket{x}\nonumber\\
&\stackrel{H^{\otimes n}}{\longmapsto}
  \ket{0}\sum_u\frac{1+\chi_u(s)}{2}\wh F(u)\ket{u}
  +\ket{1}\sum_u\frac{1-\chi_u(s)}{2}\wh F(u)\ket{u}.\label{m_D}
\end{align}
Its measurement therefore always returns a pair
$(b_i,u_i)\in\{0,1\}\times\{0,1\}^n$ where $\ip{u_i}s=b_i$. Moreover,
since by construction $\Span\{u_i|i\in{[t]}\}=\ZZ_2^n$, the system of
equations in step~\ref{step:system} accepts a unique solution that can
only be the hidden shift $s$, thus the final answer of our algorithm
is always correct.

We now show that the algorithm terminates in bounded expected time. We
need to prove that repeatedly sampling using the procedure in
step~\ref{step:sampling} yields $n$ linearly independent vectors
$u_i$, therefore spanning $\ZZ_2^n$, after a bounded expected number
of trials $t$. Let $(B,U)$ be a pair of random variables describing the measurement outcomes for the {\bf Sampling Subroutine}, and $\DU$ denote the marginal distribution of $U$. From the right-hand side of~(\ref{m_D}) it is clear that
\begin{align*}
 \DU(u)\equiv\abs{\wh F(u)}^2.
\end{align*}
Note that this distribution does not depend on $g$.

Let $d_i$ be the dimension of $\Span\{u_k|k\in{[i]}\}$. By
construction, we have $d_1=1,d_t=n$ and $d_{i+1}$ equals either $d_i$
or $d_{i}+1$. Let us bound the probability that $d_{i+1}=d_{i}+1$, or,
equivalently, that $u_{i+1}\notin\Span\{u_k|k\in{[i]}\}$. This
probability can only decrease as $d_{i}$ increases, so let us consider
the worst case where $d_{i}=n-1$. In that case, there exists some
$v\in\ZZ_2^n\setminus\{0\}$ such that $\Span\{u_k|k\in{[i]}\}$ is
exactly the subspace orthogonal to $v$. Then, the probability that
$u_{i+1}$ distributed according to $\DU$ does not lie in this subspace
(and hence $d_{i+1}=d_i+1$) is given by
\begin{align*}
 \PR[u\sim\DU]{\ip vu=1}=\sum_{u:\ip vu=1} \abs{\wh F(u)}^2=\gv,
\end{align*}
which follows from Lemma~\ref{lem:influence}.  Therefore, for any $i$,
the probability that $d_{i+1}=d_i+1$ is at least $ \gf=\min_{v}{\gv},
$ and the expected number of trials before it happens is at most
$1/\gf$. Since $d_i$ must be incremented $n$ times, the expected total
number of trials $t$ is at most $n/\gf$.

Using quantum amplitude amplification, we can obtain a quadratic
improvement over this expected running time. Indeed, instead of
repeating the {\bf Sampling Subroutine} $O(1/\gf)$ times until we
obtain a sample $u$ not in the subspace spanned by the previous
samples, we can use quantum amplitude amplification, which achieves
the same goal using only $O(1/\sqrt{\gf})$ applications of the quantum
circuit in the {\bf Sampling Subroutine} (see~\cite[Theorem~3]{BrassardHMT02}). We therefore obtain a quantum algorithm that
solves the problem with success probability 1 and an expected number
of queries $O(n/\sqrt{\gf})$. \hfill $\Box$
\end{proof}

In case a lower bound on $\gf$ is known, we have the following corollary:
\begin{corollary}\label{cor:complexity-promise}
  There exists a quantum algorithm that solves an instance of \BHSP\
  defined over the function $f$, with the promise that
  $\gf\geq\delta$, with success probability at least $1-\varepsilon$
  and using at most $O(n\log(1/\varepsilon)/\sqrt{\delta})$ oracle
  queries.  The algorithm takes expected time polynomial in the number
  of queries.
\end{corollary}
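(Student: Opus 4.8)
The final statement is Corollary~\ref{cor:complexity-promise}, which upgrades Theorem~\ref{thm:complexity}'s *expected* query bound $O(n/\sqrt{\gf})$ into a bound that holds *with certainty* (up to failure probability $\varepsilon$) under the promise $\gf \geq \delta$.

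**The plan**

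The plan is to derive this from Theorem~\ref{thm:complexity} by a standard "run-and-cut-off" or Markov-plus-repetition argument. First I would note that the algorithm of Theorem~\ref{thm:complexity} has expected number of queries $Q := O(n/\sqrt{\gf}) \le O(n/\sqrt{\delta})$ under the promise. By Markov's inequality, if we halt the algorithm after, say, $2Q$ queries, then with probability at least $1/2$ it has already terminated and output the (always-correct) answer $s$; otherwise we detect non-termination (the span has not yet reached $\ZZ_2^n$) and declare failure for this run. Then I would repeat this truncated procedure $\lceil \log_2(1/\varepsilon) \rceil$ times independently; the probability that every run fails is at most $2^{-\log_2(1/\varepsilon)} = \varepsilon$, and as soon as one run succeeds we have the correct $s$ (correctness is deterministic once the sampled $u_i$ span, as shown in the proof of Theorem~\ref{thm:complexity}). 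The total query count is $O(\log(1/\varepsilon)) \cdot O(n/\sqrt{\delta}) = O(n\log(1/\varepsilon)/\sqrt{\delta})$, matching the claimed bound. The "expected time polynomial in the number of queries" clause is inherited directly, since each truncated run does polynomial post-processing (solving the linear system) and the amplitude-amplification-based inner loop is itself polynomial-time in its query count.

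**The subtle point**

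The one place to be slightly careful is the interaction with the amplitude amplification used inside Theorem~\ref{thm:complexity}: that subroutine already turns the expected $O(1/\gf)$ Bernoulli trials into $O(1/\sqrt{\gf})$ circuit applications per "new independent vector," but it is itself a Las Vegas procedure whose query count is a random variable with the stated expectation. So the cleanest route is to treat the whole Theorem~\ref{thm:complexity} algorithm as a black box with expected query count $Q = O(n/\sqrt{\delta})$ and apply the truncation argument at that outermost level, rather than trying to truncate the inner amplitude amplification (where a hard cutoff could corrupt the success-probability-$1$ guarantee). I expect this to be the main obstacle only in the sense of getting the bookkeeping right; there is no real mathematical difficulty. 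One could alternatively cut off each of the $n$ inner amplitude-amplification stages separately at $O(\log(n/\varepsilon)/\sqrt{\delta})$ queries and union-bound over the $n$ stages, which gives $O(n\log(n/\varepsilon)/\sqrt{\delta})$ — slightly worse — so the outer-truncation version is preferable for matching the stated bound exactly.
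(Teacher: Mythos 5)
Your proposal is correct and follows essentially the same route as the paper: the paper's proof is a one-line invocation of Markov's inequality to truncate the Las Vegas algorithm of Theorem~\ref{thm:complexity} at $\Theta(n/\sqrt{\gf})$ queries for constant success probability, with the $\log(1/\varepsilon)$ repetition implicit in the stated bound. Your version simply spells out the repetition step and the (correct) choice to truncate at the outermost level rather than inside the amplitude-amplification stages.
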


\begin{proof}
  This immediately follows from Markov's inequality, since it implies
  that the algorithm in Theorem~\ref{thm:complexity} will still
  succeed with constant probability even when we stop after a time
  $\Theta(n/\sqrt{\gf})$ if it has not succeeded so far. \hfill $\Box$
\end{proof}

\section{Classical complexity of random instances of \BHSP}

In this section we show that a uniformly chosen instance of \BHSP\ is
exponentially-hard classically with high probability.

\begin{lemma}\label{l_cla_h}
  A classical algorithm solving a uniformly random instance of \BHSP\
  with probability at least $1/2$ makes $\Omega(2^{n/2})$ oracle
  queries.
\end{lemma}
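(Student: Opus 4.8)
The plan is to exhibit an explicit family of hard instances inside the uniform distribution and apply a standard query-complexity lower bound (in the spirit of Grover/Bennett--Bernstein--Brassard--Vazirani, \cite{BV:97}). The key observation is that the uniform distribution over \BHSP\ instances contains, as a sub-population of non-negligible weight, instances for which $f$ is essentially a delta function --- or more robustly, instances for which no classical algorithm can locate $v$ without probing exponentially many points. First I would argue that with probability bounded below by a constant (in fact with probability $1$ up to lower-order corrections this is \emph{not} what a random $f$ looks like, so one must be careful) --- no, the cleaner route is a direct reduction: I would show that distinguishing/solving the uniform \BHSP\ is at least as hard as a search-type problem on $2^n$ points.

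Concretely, here is the reduction I would carry out. Fix an arbitrary Boolean function $f_0$ that satisfies the first condition of Definition~\ref{d_BHSP} (no nontrivial period); a generic $f_0$ does, and in fact a uniformly random $f_0$ does with probability $1-o(1)$ by a union bound over the $2^n-1$ possible periods $t$. Given a black box that, on input $v$, allows us to query $g(x)=f_0(x+v)$, the task is to find $v$. Now I claim that from the algorithm's point of view each query $x$ to $O_g$ returns $f_0(x+v)$, which for fixed $f_0$ is just one bit of information about $v$; since $v$ is uniform over $2^n$ values and $f_0$ is (for a suitable choice, e.g.\ $f_0$ balanced or near-delta) close to maximally uninformative per query, an adversary argument shows $\Omega(2^{n/2})$ queries are needed even to succeed with probability $1/2$. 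The cleanest instantiation: take $f_0$ to be the delta function (value $1$ only at $0$); then $g(x)=1$ iff $x=v$, so solving \BHSP\ is \emph{exactly} unstructured search on $2^n$ points, which requires $\Omega(2^{n/2})$ queries to $O_g$ by the optimality of Grover's algorithm \cite{BV:97}. Since this particular $f_0$ occurs with probability $2^{-2^n}$ under the uniform distribution, that alone is too rare; so instead I would use a \emph{hybrid/averaging} argument: partition the uniform distribution on pairs $(f,v)$ by first drawing $f$, and show that for \emph{every} fixed valid $f$ the conditional problem "find uniform $v$ given $O_f$ and $O_g=O_{f(\cdot+v)}$" requires $\Omega(2^{n/2})$ queries. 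This last claim follows because, conditioned on $f$, the only source of randomness is $v\in\ZZ_2^n$ uniform, and each query to $O_g$ at point $x$ reveals the single bit $f(x+v)$; a standard information-theoretic / adversary lower bound (or a reduction from the problem of finding a uniformly random marked vertex, since the map $v\mapsto (f(x+v))_x$ is a bijection on $\ZZ_2^n$) gives that $\Omega(2^{n/2})$ queries are necessary to pin down $v$ with constant success probability. Averaging over $f$ preserves the bound.

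The main obstacle I anticipate is making the per-query information argument fully rigorous for an \emph{arbitrary} fixed $f$ rather than just the delta function: when $f$ is balanced, a single query returns a bit that splits the candidate set for $v$ roughly in half, so a naive counting bound only gives $\Omega(n)$ queries, not $\Omega(2^{n/2})$. The resolution is that the lower bound should be proved against the \emph{query complexity of the oracle $O_g$ viewed as a black box encoding $v$} --- i.e.\ one must observe that learning $v$ from $O_g$ alone, when $f$ is known, is equivalent (via the bijection $x\mapsto x+v$) to the problem "given oracle access to the indicator of an unknown point $v$ through a fixed reversible post-processing $f$, find $v$" --- and here the right tool is the polynomial method or the adversary method applied to the composed function, not elementary counting. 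I would therefore structure the proof as: (i) reduce uniform \BHSP\ to the worst-case-over-$f$ average-over-$v$ problem; (ii) for each $f$, reduce to unstructured search by noting queries to $O_g$ factor through the $2^n$-valued unknown $v$; (iii) invoke \cite{BV:97} (optimality of Grover search) to conclude $\Omega(2^{n/2})$. Step (ii) is where care is needed to ensure the reduction does not secretly give the algorithm extra power, and it is the part I expect to occupy most of the write-up.
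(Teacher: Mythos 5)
There is a genuine gap, and it sits exactly at step (ii) of your plan. The claim that ``for every fixed valid $f$ the conditional problem of finding a uniform $v$ from $O_g$ requires $\Omega(2^{n/2})$ queries'' is false. Once you condition on $f$, the optimal algorithm for that conditional distribution is allowed to depend on $f$, i.e.\ it effectively knows $f$ and pays only for queries to $O_g$. For a typical fixed $f$ (and certainly for a uniformly random one, with high probability), $O(n)$ queries to $O_g$ at fixed points $x_1,\dots,x_m$ already determine $v$ information-theoretically, because the map $v\mapsto(f(x_1+v),\dots,f(x_m+v))$ is injective for $m=O(n)$ with high probability over $f$; the algorithm then just outputs the unique consistent $v$. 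The paper itself flags this phenomenon in a footnote: for a \emph{known} bent function the classical query complexity drops to $O(n)$. So there is no $\Omega(2^{n/2})$ bound to be proved for fixed $f$, and no amount of adversary or polynomial method will rescue the hybrid argument --- your own worry about the balanced case splitting the candidate set in half per query is not a technicality to be engineered around; it is the correct behaviour of the fixed-$f$ problem. (As a side issue, the $\Omega(2^{n/2})$ bound for unstructured search from \cite{BV:97} is a \emph{quantum} lower bound; the classical bound for search is $\Omega(2^n)$, so even for the delta function your citation is addressing the wrong model.)

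The source of hardness in the lemma is that $f$ itself is unknown and uniformly random, and the correct tool is a birthday-paradox argument over the joint randomness of $(f,s)$, not a reduction to search. The paper argues as follows: a query at $x$ to $O_f$ returns $f(x)$ and a query at $y$ to $O_g$ returns $f(y+s)$; as long as no two queried points satisfy $X_i-X_j=s$, all the answers are independent uniform bits whose distribution does not depend on $s$, so the posterior on $s$ remains uniform outside the set $S_k=\{X_i-X_j\mid i,j\in[k]\}$. Since $\sz{S_k}\leq k^2$, the probability that $k$ queries ever produce a difference equal to the uniformly random $s$ is at most roughly $k^2/(2^n-k^2)$, and if that never happens the algorithm can do no better than guess. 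This yields success probability at most $(\qc^2+1)/(2^n-\qc^2)$, forcing $\qc\in\Omega(2^{n/2})$. Note that the exponent $n/2$ arises from the quadratic growth of the difference set $S_k$, i.e.\ a collision bound, which is structurally quite different from the search lower bound you were aiming to invoke.
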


\begin{proof}
  Consider a classical algorithm $\Ac$ that makes $\qc$ queries to the
  oracles $A_f$ and $A_g$ and with probability at least $1/2$ returns
  the unique $s$ satisfying $g(x)\equiv f(x+s)$
  (cf.~Definition~\ref{d_BHSP}).  For notational convenience we assume
  that $\Ac$ only makes duplicated queries $\left(f(x),g(x)\right)$.
  This can at most double the total number of oracle calls.

  Consider the uniform distribution of $f:\ZZ_2^n\to\ZZ_2$ and
  $s\in\ZZ_2^n$, and let an input instance of \BHSP\ be chosen
  accordingly.  Let $\left(X_1,\dots, X_{\qc}\right)$ be random
  variables representing the queries made by $\Ac$.  Then by the
  correctness assumption, the values $f(X_1),g(X_1),\dots,
  f(X_{\qc}),g(X_{\qc})$ can be used to predict $s$ with probability
  at least $1/2$.

  First we observe that if, after $k$ queries, it holds that
  $X_i-X_j\neq s$ for every $i,j\in[k]$, then even conditionally on
  the values of $f(X_1),g(X_1),\dots, f(X_{k}),g(X_{k})$ every
  $s\notin\{X_i-X_j|i,j\in[k]\}$ has exactly the same probability to
  occur. More precisely, if $S_k=\{X_i-X_j|i,j\in[k]\}$ and $E_k$ is
  the event that $s\in S_k$, we have
\begin{align}
 \PR{s=s_0|\neg E_k}=\frac{1}{2^n-\sz{S_k}}
\leq \frac{1}{2^n-k^2}
\end{align}
for any $s_0\notin S_k$ and $0\leq k\leq\qc$.  In other words, modulo
``$s\notin S_k$'' the actual values of $f$ and $g$ at points
$\{X_i|i\in[k]\}$ provide no additional information about $s$, and the
best the algorithm can do in that case is a random guess, which
succeeds with probability at most $1/(2^n-k^2)$.

Now let us analyze the probability that
$S_{\qc}=\{X_i-X_j|i,j\in[\qc]\}$ contains $s$, that is,
$\PR{E_{\qc}}$.  Since $\sz{S_{k+1}}-\sz{S_k}\leq k$,
we have by the union bound
\begin{align*}
 \PR{E_{k+1}|\neg E_k}
&\leq\sum_{s_0\in S_{k+1}}\PR{s=s_0|\neg E_k} \leq \frac{k}{2^n-k^2}.
\end{align*}
Consequently,
\begin{align*}
  \PR{E_{\qc}}\leq \frac{\sum_{k=0}^{\qc-1}k}{2^n-\qc^2}
  \leq\frac{\qc^2}{2^n-\qc^2}.
\end{align*}
Finally, we can bound the probability that the algorithm succeeds
after $\qc$ oracle queries as
\begin{align*}
 \PR{\Ac\textrm{ succeeds}}
&=\PR{\Ac\textrm{ succeeds}|E_{\qc}}\cdot\PR{E_{\qc}}\\
&\qquad +\PR{\Ac\textrm{ succeeds}|\neg E_{\qc}}\cdot\PR{\neg E_{\qc}}\\
&\leq \PR{E_{\qc}}+\PR{\Ac\textrm{ succeeds}|\neg E_{\qc}}\leq\frac{\qc^2+1}{2^n-\qc^2},
\end{align*}
which is larger than $1/2$ only if $\qc\in\Omega\left(2^{n/2}\right)$,
as required. \hfill $\Box$
\end{proof}

We are now ready to state our main theorem which is an exponential
quantum-classical separation for an average case problem.

\begin{theorem}[Average case exponential separation]
Let $({\cal O}_f,
  {\cal O}_g)$ be an instance of a Boolean hidden shift problem (BHSP)
  where $g(x)=f(x+v)$ and $f$ and $v$ are chosen uniformly at random.
  Then there exists a quantum algorithm which finds $v$ with bounded error using $O(n)$
  queries and in $O(\mathrm{poly}(n))$ time whereas any classical algorithm needs
  $\Omega(2^{n/2})$ queries to achieve the same task.
\end{theorem}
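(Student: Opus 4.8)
The plan is to combine the classical lower bound of Lemma~\ref{l_cla_h} with the quantum upper bound of Corollary~\ref{cor:complexity-promise}, the only missing ingredient being a concentration argument showing that a uniformly random Boolean function $f$ has minimum influence $\gf$ bounded below by a constant with overwhelming probability. The classical half is immediate: Lemma~\ref{l_cla_h} already asserts that any classical algorithm succeeding with probability at least $1/2$ on a uniformly random instance of \BHSP\ makes $\Omega(2^{n/2})$ queries, which is exactly the classical bound claimed.

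For the quantum half, first I would fix a nonzero $v\in\ZZ_2^n$ and note that the $2^{n-1}$ pairs $\{x,x+v\}$ partition $\ZZ_2^n$; for a uniformly random $f$ the indicator variables $\mathbf{1}[f(x)\neq f(x+v)]$ are independent across these pairs, each equal to $1$ with probability $1/2$. Hence $2^{n-1}\gv$ is distributed as $\mathrm{Binomial}(2^{n-1},1/2)$, and a Chernoff bound gives $\PR{\gv\leq 1/4}\leq e^{-c\,2^{n-1}}$ for an absolute constant $c>0$. A union bound over the $2^n-1$ nonzero $v$ then yields $\PR{\gf\leq 1/4}\leq 2^n e^{-c\,2^{n-1}}=2^{-\Omega(2^n)}$, so with probability $1-2^{-\Omega(2^n)}$ over the choice of $f$ the instance satisfies the promise $\gf\geq 1/4$.

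Finally I would invoke Corollary~\ref{cor:complexity-promise} with $\delta=1/4$ and a small constant $\varepsilon$ (say $\varepsilon=1/6$), truncating the algorithm after its worst-case query bound so that it always makes $O(n\log(1/\varepsilon)/\sqrt{\delta})=O(n)$ queries and runs in $O(\mathrm{poly}(n))$ time. Conditioned on the high-probability event $\gf\geq 1/4$, this algorithm returns $v$ with probability at least $1-\varepsilon$; averaging over $f$ and $v$, its overall success probability is at least $(1-2^{-\Omega(2^n)})(1-\varepsilon)\geq 2/3$ for all sufficiently large $n$, which is the desired bounded-error guarantee. I do not expect a genuine obstacle here; the only point needing care is the bookkeeping of the two sources of failure — the exponentially small probability that the random instance is ``hard'' for the quantum algorithm, and the algorithm's own error probability — together with the observation that we must bound the number of queries in the worst case rather than merely in expectation, which is precisely why I would use Corollary~\ref{cor:complexity-promise} here instead of Theorem~\ref{thm:complexity} directly.
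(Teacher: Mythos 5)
Your proposal is correct and follows essentially the same route as the paper: a Chernoff bound over the $2^{n-1}$ independent pairs $\{x,x+v\}$ showing $\gamma_{f,v}$ is bounded below by a constant except with doubly-exponentially small probability, a union bound over all $v$, then an application of Corollary~\ref{cor:complexity-promise} with constant $\delta$ and $\varepsilon$ combined with Lemma~\ref{l_cla_h}. Your bookkeeping of the two failure sources is in fact slightly more explicit than the paper's.
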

\begin{proof}
  For a fixed $v$ and randomly chosen $f$, consider the $2^{n-1}$
  mutually independent events ``$f(x)=f(x+v)$''. By definition of
  $\gamma_{f,v}$ and the Chernoff bound, the probability that
  $\gamma_{f,v}<1/3$ is at most $e^{-\Omega(2^n)}$. Since this is
  double-exponentially small in $n$ we obtain from an application of
  the union bound to the $2^n$ possible values of $v$ that if
  $f:\ZZ_2^n\to\ZZ_2$ is chosen uniformly at random then ${\bf
    Pr}_f[\gf<1/3]\in e^{-\Omega(2^n)}$. We now apply Corollary
  \ref{cor:complexity-promise} for constant $\gf$ to obtain a quantum
  algorithm that uses at most $O(n)$ queries and outputs the correct
  hidden shift $v$ with constant probability of success (i.e.,
  $\varepsilon$ is chosen to be constant). Combining this with the
  exponential lower bound from Lemma~\ref{l_cla_h} implies that there
  is an exponential gap between the classical and quantum complexity
  of the \BHSP\ defined over a random Boolean function. \hfill $\Box$
\end{proof}

\section{Discussion and open problems}
We presented a quantum algorithm for the Boolean hidden shift problem
that is based on sampling from the space of vectors that are
orthogonal to the hidden shift. It should be noted that our algorithm
reduces to one of the two algorithms given in \cite{Roetteler:2010} in
case the function is a bent function. We related the running time and
the query complexity of the algorithm to the minimum influence of the 
function and showed that for random functions these complexities are polynomial. This leads to
an average case exponential separation between the classical and quantum time complexity
for Boolean functions. An interesting question is whether these methods 
can be generalized and adapted for the case of non-Boolean functions
also. Furthermore, we conjecture that the complexity of our quantum algorithm is optimal up to polynomial factors for any function.

\subsection*{Acknowledgments}

The authors acknowledge support by ARO/NSA under grant
W911NF-09-1-0569.  We wish to thank Andrew Childs, Sean Hallgren,
Guosen Yue and Ronald de Wolf for fruitful discussions.

\bibliography{hidden-shift}

\begin{appendix}

\section{Proof of Lemma~\ref{lem:influence}\label{app:influence}}
\begin{lemma2}\label{lem2:influence}
$\gv=\sum_{u:\ip vu=1} \abs{\wh F(u)}^2.$
\end{lemma2}
\begin{proof}
Let us consider the following function
$
 \tf(x)\deq F(x)-F(x+v).
$
Its Fourier transform reads
\begin{align*}
 \wh\tf(u)
 =\E[x]{F(x)\cdot\chi_u(x)-F(x+v)\cdot\chi_u(x)}
 =(1-\chi_u(v))\cdot\wh F(u).
\end{align*}
Therefore, we have
\begin{align*}
\sum_{u:\ip vu=1} \abs{\wh F(u)}^2
 &=\frac{1}{4}\sum_{u\in\ZZ_2^n}\abs{(1-\chi_u(v))\cdot\wh F(u)}^2
 =\frac{1}{4}\sum_{u\in\ZZ_2^n}\abs{\wh\tf(u)}^2\\
 &=\frac{1}{4}\E[x]{\abs{\tf(x)}^2}=\PR[x]{F(x)\neq F(x+v)}=\gv,
\end{align*}
where in the second line we have used Parseval's identity.
\hfill $\Box$
\end{proof}

\MyComment{Look for ...-s}

\MyComment{Spell-check}

\mynb{To consider next:
 1) What if f has nontrivial "self-shifts"?
 2) Non-Boolean case
 3) Partial answers due to "unfriendly" structure of f ('1' above is an extreme case)
}


\end{appendix}

\end{document}